\newtheorem{theorem}{Theorem}
\newtheorem{lemma}{Lemma}
\newtheorem{corollary}{Corollary}
\newenvironment{proof}{\paragraph{Proof:}}{\hfill$\square$}
\begin{document}
	
\title{Average-case Approximation Ratio of Scheduling without Payments}

\author[]{Jie Zhang}
\affil[]{University of Southampton, UK\\ jie.zhang@soton.ac.uk}
\date{}
\maketitle

\begin{abstract}
Apart from the principles and methodologies inherited from Economics and Game Theory, the studies in Algorithmic Mechanism Design typically employ the \emph{worst-case analysis} and \emph{approximation schemes} of Theoretical Computer Science. For instance, the \emph{approximation ratio}, which is the canonical measure of evaluating how well an incentive-compatible mechanism approximately optimizes the objective, is defined in the worst-case sense. It compares the performance of the optimal mechanism against the performance of a truthful mechanism, for all possible inputs.  

In this paper, we take the \emph{average-case analysis} approach, and tackle one of the primary motivating problems in Algorithmic Mechanism Design -- the scheduling problem~\citep{NR99}. One version of this problem which includes a verification component is studied by~\cite{DBLP:journals/mst/Koutsoupias14}. It was shown that the problem has a tight approximation ratio bound of $(n+1)/2$ for the single-task setting, where $n$ is the number of machines. We show, however, when the costs of the machines to executing the task follow \emph{any} independent and identical distribution, the \emph{average-case approximation ratio} of the mechanism given in ~\citep{DBLP:journals/mst/Koutsoupias14} is upper bounded by a constant. This positive result asymptotically separates the average-case ratio from the worst-case ratio, and indicates that the optimal mechanism for the problem actually works well on average, although in the worst-case the expected cost of the mechanism is $\Theta{(n)}$ times that of the optimal cost. 
\end{abstract}


\section{Introduction}
The field of Algorithmic Mechanism Design \citep{NR99, DBLP:journals/geb/NisanR01, AGT, PT:09} focuses on optimization problems where the input is provided by self-interested agents that participate in the mechanism by reporting their private information. These agents are utility maximizers so they may misreport their private information to the mechanism, if that results in a more favorable outcome to them. Given the reports from the agents as input, a \emph{mechanism} is a function that maps the input to allocations and payments if monetary transfers are allowed. The goal of the mechanism designer is twofold. On the one hand, the objective is to motivate agents to always report truthfully, regardless of what strategies the other agents follow; on the other hand, the aim is to optimize a specific objective function that measures the quality of the outcome, subject to a polynomial-time implementability constraint.  However, these objectives are usually incompatible. Therefore, often we need to trade one objective to achieve the other. One standard approach is to maintain the \textit{truthfulness} property of the mechanism, and approximately optimize the specific objective function (e.g., social welfare maximization, revenue maximization, or cost minimization). The \textit{approximation ratio} is the canonical measure for evaluating  the performance of a truthful mechanism towards this goal. It compares the performance of the truthful mechanism against the optimal mechanism which is not necessarily truthful, over all possible inputs.  

The approximation ratio is defined in the worst-case sense which resembles the worst-case time complexity of the  algorithms. These are strong but very pessimistic measures. On the one hand, if it is possible to obtain a small worst-case ratio, then it is a very solid guarantee on the performance of the mechanism no matter what input is provided. On the other hand, if it turns out to be a large value, then  one can hardly judge the performance of the mechanism as it may perform well on most inputs and perform poorly on only a few inputs. To address this issue, \cite{DBLP:conf/mfcs/DengG017} propose an alternative measure, the \textit{average-case approximation ratio}, that compares the performance of the truthful mechanism against the optimal mechanism, averaged over all possible inputs when they follow a certain distribution. Although average-case  analysis is usually more complex than the worst-case analysis, it provides a different measure. 

In this paper, we study the problem of scheduling unrelated machines without money. Scheduling is one of the primary problems in algorithmic mechanism design. In the general setting, the problem is to schedule a set of tasks to $n$ unrelated machines with private processing times, in order to minimize the makespan. The machines (alternatively, speaking of the agents in game theoretical settings) are rational and want to minimize their execution time. They may achieve this by misreporting their processing times to the mechanism.  No monetary payments are allowed in this problem. The objective is to design truthful mechanisms with good approximation ratio.  One important version of the problem is studied by \cite{DBLP:journals/mst/Koutsoupias14} in which  the machines are bound by their declarations. More specifically, in case a machine declares a longer time than its actual time for a task  and it is allocated the task, then in practice its processing time must be the declared value. This is in the spirit that machines have been observed during the execution of the task and cannot afford to be caught lying about the execution times (an unaffordable penalty would apply). One can consider this as a monitoring model of the problem and it is more complex than some other models. \cite{DBLP:journals/mst/Koutsoupias14} devises a truthful-in-expectation mechanism which achieves the tight bound of $\frac{n+1}{2}$ for one task, and generalizes to $\frac{n(n+1)}{2}$ for multiple tasks when the objective is minimizing makespan and $\frac{n+1}{2}$  when the objective is minimizing social cost. We note that, the tight bound instance is obtained when the ratio of the minimum value of the processing times against the maximum value of the processing times approaches 0. Obviously this instance  is very unlikely to occur in practice. Therefore, it would be interesting to  understand  how well the optimal mechanism given in \citep{DBLP:journals/mst/Koutsoupias14} performs on average, when the instances are chosen from a certain distribution.


\subsection{Our contribution}
This paper provides a fresh view of the performance of the mechanism developed by ~\cite{DBLP:journals/mst/Koutsoupias14}. In particular, we show the following:
\begin{itemize}
\item The average-case approximation ratio of the mechanism devised by  \cite{DBLP:journals/mst/Koutsoupias14} is upper bounded by a constant, no matter which distribution is given. 
\end{itemize}
In contrast, the  worst-case approximation ratio of the mechanism shown in  \citep{DBLP:journals/mst/Koutsoupias14} is $\frac{n+1}{2}$ which is asymptotically different.

A major criticism of  the average-case analysis is that the results usually depend on the assumptions about the distribution of inputs, and these are not guaranteed to hold in practice. Even for the same mechanism, when it is applied to different application areas, the real-world distribution may vary. For example, \cite{DBLP:conf/mfcs/DengG017} only show that their average-case result  holds for a uniform distribution; the positive results in the Bayesian analysis of auctions usually need to assume that the hazard rate function is monotone non-decreasing. In this paper, we develop a powerful proof that works for any i.i.d distribution. This is extraordinary in the average-case analysis literature, even in the more established space such as the analysis of time complexity of algorithms. Our results imply that, although the worst-case approximation ratios of the known mechanisms are quite negative, the average-case approximation ratios are rather positive.


\subsection{Related work}

The field of Algorithmic Mechanism Design was initiated by \citep{NR99, DBLP:journals/geb/NisanR01} and is further enriched by \citep{PT:09} to approximate mechanism design without money.  For a more detailed investigation, we refer the reader to \citep{AGT}. 

The scheduling problem has been extensively studied. However, after nearly two decades, we still have little progress in resolving this challenge. The known approximation ratio upper bounds are rather negative, and there is a large gap between the lower bounds and upper bounds. For the model presented by \citep{NR99} where payments are allowed to facilitate designing truthful mechanisms, the best known upper bound is given in their original paper and is achieved by allocating each task independently using the classical VCG mechanism, while the best known lower bound is 2.61 \citep{DBLP:journals/algorithmica/KoutsoupiasV13}. \cite{DBLP:journals/mor/AshlagiDL12} prove that the upper bound of $n$ is tight for anonymous mechanisms. For randomized mechanisms, the best known upper bound is $\frac{n+1}{2}$ shown by \cite{DBLP:conf/soda/MualemS07}. For the special case of related machines, where the private information of each machine is a single value, \cite{DBLP:conf/focs/ArcherT01} give an randomized 3-approximation  mechanism. \cite{DBLP:journals/geb/LaviS09}  show a constant approximation ratio for the special case that the processing times of each task can take one of two fixed values. \cite{DBLP:journals/tcs/Yu09} generalize this result to two-range-values, while together with \cite{DBLP:conf/stacs/LuY08} and \cite{DBLP:conf/wine/Lu09}, they show constant  bounds for the case of two machines. For the setting that payments are not allowed, \cite{DBLP:journals/mst/Koutsoupias14} first considers the setting that the machines are bound by their declarations. This is influenced by the notion of impositions that appears in \citep{DBLP:journals/tcs/FotakisT13} for the facility location problems, as well as the notion of verification that appears in \citep{DBLP:journals/jcss/AulettaPPP09}. \cite{DBLP:journals/geb/PennaV14} present a general construction of collusion-resistant mechanisms with verification that return optimal solutions for a wide class of mechanism design problems, including the scheduling problem. The mechanism presented in \citep{DBLP:journals/mst/Koutsoupias14} has a tight  approximation ratio bound of $\frac{n+1}{2}$ for the single-task setting; by running the mechanism independently on multiple tasks a tight bound of  $\frac{n+1}{2}$ can be achieved for social cost minimization and an upper bound of $\frac{n(n+1)}{2}$ can be achieved for the makespan minimization. \cite{DBLP:conf/wine/KovacsMV15} further apply mechanism design with monitoring techniques to the truthful RAM allocation problem. There are some works on characterizing truthful mechanisms in scheduling problems, such as~\citep{DBLP:conf/ijcai/KovacsV15}, as well as scheduling with uncertain execution time, such as~\citep{DBLP:conf/aaai/ConitzerV14}.  

In \citep{DBLP:conf/mfcs/DengG017}, the authors propose to study the average-case and smoothed approximation ratios, and conduct these analyses on the one-sided matching problem. They show that, although the asymptotically best truthful mechanism for the problem is random priority and its worst-case approximation ratio is bounded by $\Theta{(\sqrt{n})}$, random priority has a constant average-case approximation ratio when the inputs follow a uniform distribution, and it has a constant smoothed approximation ratio.

Notably, the analysis of average-case approximation ratio takes a similar but fundamentally different approach to the Bayesian analysis. In the Bayesian auction design literature \citep{DBLP:journals/sigecom/ChawlaS14, DBLP:conf/stoc/HartlineL10}, the focus is on how well a truthful mechanism can approximately maximize the expected revenue, when instances are taken from the entire input space. More specifically, the dominant approach in the study of Bayesian auction design is the \emph{ratio of expectations}. A more detailed comparison of the two metrics will be given in the next section after their definitions are given.


\section{Preliminaries}

In the problem of scheduling unrelated machines without payment, there are a set of self-interested machines (alternatively, speaking of self-interested agents in game theoretical settings) and a set of tasks. The general setting comprises  $n$ machines and $m$ tasks. In this paper we consider the setting of a single task. The machines are lazy and prefer not to execute any tasks. There are no monetary tools to incentivise machines to execute tasks. Machine $i$ needs time (or cost) $t_i$ to execute the task,  $i\in [n]$. These $t_i$'s are independent to each other. There could be two different objectives. One  is to allocate the task to machines so that the makespan is minimized; the other is to allocate the task so that the social cost is minimized. The \emph{makespan} is the total length of the schedule, and the \emph{social cost} is the sum of all agents' costs. In the single-task setting, these two objectives are identical. Obviously, allocating the task to the machine with minimum execution time  is the optimal solution. However, the mechanism  has no access to the values $t_i$. Instead, each machine reports an execution time $\tilde{t}_i$ to the mechanism, where $\tilde{t}_i$ is not necessarily equal to $t_i, \forall i\in [n]$.  A mechanism is a (possibly randomized) algorithm which computes an allocation based on the declarations $\tilde{t}_i$ of the machines. Denote the output of the mechanism by $\mathbf p=(p_i)_{i\in[n]}$, where $p_i$ is an indicator variable in deterministic mechanisms and is the probability of machine $i$ getting allocated to execute the task in randomized mechanisms.   
We follow the standard literature and consider the case that machines are bound by their reports. That is, the cost of machine $i$ for the task is $\max\{t_i,\tilde{t}_i\}$. So in case a machine $i$ declares $\tilde{t}_i \ge t_i$ and it is allocated the task, then its actual cost is the declared value $\tilde{t}_i$ and not $t_i$. This is in the spirit that machines are being observed during the execution of the task and cannot afford to be caught lying about the execution times (a high penalty would apply). Therefore, the expected cost of machine $i$ is $c_i=c_i(t_i,\tilde{\bf t})=p_i(\tilde{\bf t}) \max(t_i,\tilde{t}_i)$. In approximate mechanism design, we restrict our interest to the class of \emph{truthful mechanisms}. A mechanism is \textit{truthful} if for any report of other agents, the expected cost $c_i$ of agent $i$ is minimized when $\tilde{t}_i = t_i$. We note that this weak notion of truthfulness, truthful-in-expectation, enables us to consider a richer class of mechanisms than universal truthfulness. However, as mentioned in the Introduction, even with this rich class of truthful mechanisms, the performance of these mechanisms is still very limited in terms of approximation ratio. The canonical measure of efficiency of a truthful mechanism $\mathrm{M}$ is the \emph{worst-case approximation ratio},
\begin{equation*}
r_{\text{worst}}(\mathrm{M}) = \sup_{{\bf t} \in \mathcal{T}} \frac{SC_{\mathrm M}({\bf t})}{SC_{\mathrm{OPT}}({\bf t})},
\end{equation*}
where $SC_{\mathrm{OPT}}({\bf t})= \min_{\bf p \in \mathcal{P}}\sum_{i=1}^{n}c_i$ is the optimal social cost which is essentially the minimum $t_i$, for all $i\in[n]$; $SC_{\mathrm M}({\bf t})$ is the social cost of the mechanism $\mathrm{M}$ on the input $\bf t$; and $\mathcal{T}$ is the input space. This ratio compares the social cost of the truthful mechanism $\mathrm{M}$ against the social cost of the optimal mechanism $\mathrm{OPT}$ over all possible inputs $\bf t$. 

In \citep{DBLP:journals/mst/Koutsoupias14}, the author devises the following randomized mechanism. 

\vspace{0.2cm}
\noindent

{\bf Mechanism M}: 
Given the input $\mathbf t=(t_1,\ldots,t_n)$, without loss of generality, let the values of $t_i$'s be in ascending order $0< t_1 \le t_2 \le \cdots \le t_n$. Then the allocation probabilities are 
\begin{align*}
p_1 &=\frac{1}{t_1} \int_{0}^{t_1} \prod_{i=2}^{n} \left(1-\frac{y}{t_i}\right) dy, \\
p_k &=\frac{1}{t_1 t_k} \int_{0}^{t_1} \int_{0}^{y} \prod_{\substack{i=2,\ldots,n \\ i\neq k}} \left( 1-\frac{x}{t_i} \right) dx dy, \text{for} \,\ k\neq 1.
\end{align*}

Note that this is a \emph{symmetric} mechanism, so it suffices to describe it when $0<t_1 \le t_2 \le \cdots \le t_n$. It is shown in \citep{DBLP:journals/mst/Koutsoupias14} that this mechanism is truthful and achieves an approximation ratio tight bound of $\frac{n+1}{2}$. 

Analogously to the definition of the \emph{average-case approximation ratio} of mechanisms for social welfare maximization in \cite{DBLP:conf/mfcs/DengG017}, we define it for social cost minimization as follows:
\begin{equation*}
r_{\text{average}}(\mathrm{M}) = \mathbb{E}_{{\bf t} \sim \mathrm{D}} \left[ \frac{SC_{\mathrm{M}}(\mathrm{\mathbf t})}{SC_{\mathrm{OPT}}(\mathrm{\mathbf t})} \right] ,
\end{equation*}
where the input $\mathbf t=(t_1,\ldots,t_n)$ is chosen from a distribution $\mathrm{D}$. Hence, the metric we study in this paper is \emph{the expectation of the ratio}.

\subsection{Comparison with the Bayesian Approach}

In Bayesian mechanism design \citep{DBLP:journals/sigecom/ChawlaS14, DBLP:conf/stoc/HartlineL10}, there is also a prior distribution from which the agent types come from. However,  the objective is to characterize the maximum ratio (for some given distribution of the agent types) of the expected social welfare (or social cost) of a truthful mechanism over the expected social welfare (or social cost) of the optimal mechanism. So, the metric in the study of the Bayesian approach is \emph{the ratio of expectations}. That is, the objective is to characterize the ratio $r$ in the following formula,
\begin{equation*}
r \cdot \mathop{\mathop{\mathbb{E}}} \left[ SC_{\mathrm{OPT}}(\bf t) \right] \le  \mathop{\mathop{\mathbb{E}}} \left[ {SC_{\mathrm{M}}(\bf t)} \right] .
\end{equation*}

Therefore, in Bayesian approach,  the optimal mechanism is in respect of the entire input space. In other words, it outputs the optimal solution in expectation, when the inputs run over the entire prior distribution. In contrast, in the analysis of average-case approximation ratio, the optimal mechanism is in respect of each individual input instance. So, if we interpret the optimal mechanism as an adversary to the truthful mechanism, this adversary can take different actions in each round. 

In light of this difference, also due to the fact that the expectation of the ratio is a nonlinear function of the two random variables, the analysis of average-case approximation ratio introduces more technical challenges. In some specific scenarios, a constant average-case approximation ratio would imply a constant approximation ratio under Bayesian approach.


\section{Average-case Approximation Ratio}

In this section we show that the average-case approximation ratio of the mechanism M is upper bounded by a constant, when the inputs $t_i$'s follow any independent and identical distribution $\mathrm{D}[t_{\min}, \infty)$, where $t_{\min}$ is the minimum processing time for the task. 

Let $h$ be a constant, and denote event $\mathrm{A}=\{ t_{\frac{n}{2}} \le h \cdot t_{\min}\}$. So, it corresponds to the case that the $\frac{n}{2}$-th order statistic of the inputs $t_i$'s is less than or equal to $h \cdot t_{\min}$. Firstly, we show that if $\mathrm{A}$ is true, then the social cost of the mechanism $\mathrm{M}$ is upper bounded by a constant times $t_1$.

\begin{lemma}\label{lem:cost} 
For any constant $h>0$, given that event $\mathrm{A}$ holds, we have
\begin{equation*}
SC_{\mathrm{M}}(\mathbf t) \le (2h + 1) t_1.
\end{equation*}
\end{lemma}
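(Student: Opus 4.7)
The plan is to decompose $SC_{\mathrm{M}}(\mathbf{t}) = \sum_{k=1}^n p_k t_k$ into three parts and bound each separately. The term $p_1 t_1$ is at most $t_1$ trivially, since $p_1 \le 1$. For $k \ge 2$, I would split the indices by whether $t_k \le h t_1$ or $t_k > h t_1$. In the small-$t_k$ case, using $t_k \le h t_1$ together with $\sum_{k \ge 2} p_k \le 1$, the contribution is bounded by $h t_1 \sum_{k \ge 2,\, t_k \le h t_1} p_k \le h t_1$; this sub-case does not even invoke event $\mathrm{A}$.

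In the large-$t_k$ case, I would exploit event $\mathrm{A}$: at least $n/2$ of the $t_i$'s are at most $h t_1$. For any $k \ge 2$ with $t_k > h t_1$, the set $\{2, \ldots, n\} \setminus \{k\}$ therefore contains at least $n/2 - 1$ indices $i$ with $t_i \le h t_1$. For each such $i$ and every $x \in [0, t_1]$, one has $1 - x/t_i \le 1 - x/(h t_1)$, while the remaining factors all lie in $[0, 1]$. Hence
\[
\prod_{\substack{i = 2, \ldots, n \\ i \ne k}} \left(1 - \frac{x}{t_i}\right) \le \left(1 - \frac{x}{h t_1}\right)^{n/2 - 1}.
\]
Substituting this pointwise bound into the formula for $p_k$ and evaluating the inner integral via the antiderivative of $(1 - x/(h t_1))^{n/2 - 1}$ yields $p_k t_k \le 2 h t_1 / n$. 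Since at most $n/2$ indices $k$ satisfy $t_k > h t_1$, their total contribution to the social cost is at most $h t_1$.

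Summing the three bounds gives $SC_{\mathrm{M}}(\mathbf{t}) \le t_1 + h t_1 + h t_1 = (2h + 1) t_1$. The main conceptual step is choosing the threshold $h t_1$ for the split and matching it to the product structure of the mechanism, so that the required count of small factors in the product defining $p_k$ follows immediately from event $\mathrm{A}$; the integral bound is then a routine antiderivative computation whose $1/n$ decay exactly cancels the $n/2$ large-$t_k$ terms. A minor technical point is reading $n/2$ as $\lfloor n/2 \rfloor$ when $n$ is odd, which does not affect the argument.
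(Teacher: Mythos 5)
Your proof is correct, and while it rests on the same key observation as the paper's --- event $\mathrm{A}$ forces $\Omega(n)$ of the factors $1-\tfrac{x}{t_i}$ in the product defining $p_k$ to be at most $1-\tfrac{x}{h t_1}$, so the inner integral decays like $\tfrac{2ht_1}{n}$ --- your decomposition of the $k\ge 2$ terms is genuinely different. The paper applies the integral estimate uniformly to all $n-1$ terms (with exponent $\tfrac{n}{2}-2$, since it cannot exclude $k$ itself from the small indices), which produces a bound of the form $\tfrac{n-1}{n-2}\cdot 2h t_1$ and requires an extra step to get back under $2ht_1$. You instead split by whether $t_k\le ht_1$: the cheap machines are absorbed by total probability mass, $\sum_{k\ge 2}p_k t_k\le h t_1\sum_{k\ge 2}p_k\le ht_1$, with no reference to the mechanism's integral formula or to event $\mathrm{A}$; the expensive machines number at most $\tfrac{n}{2}$ and each contributes at most $\tfrac{2ht_1}{n}$ (here the observation that $t_k>ht_1$ forces $k>\tfrac{n}{2}$, so all of $2,\dots,\tfrac{n}{2}$ survive in the product, is what buys you the exponent $\tfrac{n}{2}-1$). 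The two halves sum to exactly $2ht_1$ with no $\tfrac{n-1}{n-2}$ slack, so your bookkeeping is in fact slightly cleaner than the paper's. One point worth making explicit in a final write-up: the pointwise product bound requires all factors, including the majorants $1-\tfrac{x}{ht_1}$, to be nonnegative on $[0,t_1]$; this holds because event $\mathrm{A}$ forces $h\ge 1$ (otherwise $\mathrm{A}$ is empty and the lemma is vacuous), so $x\le t_1\le ht_1$.
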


\begin{proof}
The expected cost of the mechanism $\mathrm{M}$ is
\begin{align*}
SC_{\mathrm{M}}(\mathbf t)  &=  \sum_{i=1}^{n} p_i \cdot t_i = \int_{0}^{t_1} \prod_{i=2}^{n} \left( 1-\frac{y}{t_i} \right) dy  
 + \sum_{k=2}^{n} \frac{1}{t_1} \int_{0}^{t_1} \int_{0}^{y} \prod_{\substack{i=2,\ldots,n \\ i\neq k}} \left( 1-\frac{x}{t_i} \right) dx dy 
\end{align*}
Since $1-\frac{y}{t_i} \le 1$ for any $y\in \left[0,t_1\right], i=2,\dots,n$, we can simply bound the first term by 
\begin{align*}
\int_{0}^{t_1} \prod_{i=2}^{n} \left( 1-\frac{y}{t_i} \right) dy \le \int_{0}^{t_1} 1 dy = t_1
\end{align*}
Because event $\mathrm{A}$ holds, i.e., $t_{\frac{n}{2}} \le h \cdot t_{\min}$, we have $\prod_{\substack{i=2,\ldots,n \\ i\neq k}} \left( 1-\frac{x}{t_i} \right) \le  \left( 1-\frac{x}{h \cdot t_{\min}} \right)^{\frac{n}{2}-2} \cdot 1^{\frac{n}{2}}, \forall k=2,\ldots,n$. So we can bound the second term as follows. 

\begin{align*}
\sum_{k=2}^{n} \frac{1}{t_1} \int_{0}^{t_1} \int_{0}^{y}  \prod_{\substack{i=2,\ldots,n \\ i\neq k}} \left( 1-\frac{x}{t_i} \right) dx dy 
& \le \sum_{k=2}^{n} \frac{1}{t_1} \int_{0}^{t_1} \int_{0}^{y}  \left( 1-\frac{x}{h \cdot t_{\min}} \right)^{\frac{n}{2}-2} \cdot 1^{\frac{n}{2}} dx dy \\
& \le \frac{n-1}{t_1} \int_{0}^{t_1} \int_{0}^{y}  \left( 1-\frac{x}{h \cdot t_1} \right)^{\frac{n}{2}-2} dx dy \\
&= \frac{n-1}{t_1} \int_{0}^{t_1} \left( \frac{2h t_1}{n-2} - \frac{2h t_1}{n-2}\left( 1-\frac{y}{h t_1} \right)^{\frac{n}{2}-1} \right) dy \\
&= \frac{n-1}{t_1} \left[  \frac{2h t_1^2}{n-2} + \frac{4h^2 t_1^2}{n(n-2)} \left(  \left(  1 - \frac{1}{h}  \right)^{\frac{n}{2}} -1 \right) \right]  \\
&\le  \frac{n-1}{n-2} \cdot 2h t_1 \\
&< 2h \cdot t_1
\end{align*}

So, $SC_{\mathrm{M}}(\mathbf t)  =  \sum_{i=1}^{n} p_i \cdot t_i \le (2h + 1) t_1.$
\end{proof}

\vspace{0.2cm}
\noindent

Since $SC_{\mathrm{OPT}}(\mathrm{\mathbf t}) = t_1$, we get the following Corollary.

\begin{corollary}\label{coro:coro}
When event $\mathrm{A}$ holds, we have
\begin{equation*}
\mathbb{E}_{{\bf t}\sim \mathrm{D}} \left[ \frac{SC_{\mathrm{M}}(\mathrm{\mathbf t})}{SC_{\mathrm{OPT}}(\mathrm{\mathbf t})} \right]  \le 2h + 1 .
\end{equation*}
\end{corollary}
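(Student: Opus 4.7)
The corollary is essentially a direct consequence of Lemma~\ref{lem:cost}, so my plan is short. First I would recall that in the single-task setting the optimal social cost equals the minimum declared time, $SC_{\mathrm{OPT}}(\mathbf t)=t_1$, which is noted right before the mechanism is stated and was already used in the preceding discussion. Second, I would invoke Lemma~\ref{lem:cost}: whenever event $\mathrm{A}$ holds, the pointwise inequality $SC_{\mathrm{M}}(\mathbf t)\le (2h+1)\, t_1$ is valid. Dividing one by the other yields the deterministic bound
\begin{equation*}
\frac{SC_{\mathrm{M}}(\mathbf t)}{SC_{\mathrm{OPT}}(\mathbf t)} \;\le\; 2h+1 \quad \text{on the event } \mathrm{A}.
\end{equation*}

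The third step is simply to take expectation. Since $2h+1$ is a constant not depending on $\mathbf t$, and since expectation is monotone, any random variable that is pointwise bounded above by this constant has expectation bounded by the same constant. Applying this to the ratio $SC_{\mathrm{M}}(\mathbf t)/SC_{\mathrm{OPT}}(\mathbf t)$ under the conditioning of event $\mathrm{A}$ immediately gives
\begin{equation*}
\mathbb{E}_{\mathbf t \sim \mathrm{D}}\!\left[\frac{SC_{\mathrm{M}}(\mathbf t)}{SC_{\mathrm{OPT}}(\mathbf t)}\right] \;\le\; 2h+1,
\end{equation*}
which is precisely the claim.

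There is no real obstacle here; the analytic work has already been absorbed into Lemma~\ref{lem:cost}. The only subtlety worth flagging is the reading of the statement: the inequality is a bound on the (conditional or restricted) expectation on the event $\mathrm{A}$, not an unconditional bound, which is necessary because the constant $2h+1$ only controls the ratio when $t_{n/2}\le h\cdot t_{\min}$. The genuinely interesting step, which I expect the next lemma in the paper to handle, is to bound the contribution of the complementary event $\mathrm{A}^{c}$ by exploiting the i.i.d.\ assumption (so that $t_{n/2}$ concentrates around the median of $\mathrm{D}$ while $t_{\min}$ stays near $t_{\min}$), and then to choose $h$ so as to combine the two pieces into an overall constant upper bound.
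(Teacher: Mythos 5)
Your proof is correct and matches the paper's own (very brief) justification: since $SC_{\mathrm{OPT}}(\mathbf t)=t_1$, Lemma~\ref{lem:cost} gives the pointwise bound $SC_{\mathrm{M}}(\mathbf t)/SC_{\mathrm{OPT}}(\mathbf t)\le 2h+1$ on event $\mathrm{A}$, and monotonicity of expectation does the rest. Your remark that the expectation must be read as conditional on (or restricted to) $\mathrm{A}$ is a fair clarification of the statement, but it does not change the argument.
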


Obviously, Lemma \ref{lem:cost} and Corollary \ref{coro:coro} hold regardless of the distribution.

\vspace{0.2cm}

Secondly, we show that there exists a constant $h$ such that event $\mathrm{A}$ occurs with a large probability. Intuitively, the larger $h$ is, the higher probability that event $\mathrm{A}$ occurs.  We will need the  following Lemma to find  such an $h$.

\begin{lemma}\label{lem:Robbins} 
For any $n>1$, we have ${{n}\choose{n/2}} \le \frac{e}{\pi \sqrt{n}} \cdot 2^n$,
where $e$ is the base of the natural logarithm.
\end{lemma}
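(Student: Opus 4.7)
The plan is to deduce the bound from Stirling's approximation in the explicit two-sided form due to Robbins, namely
\[
\sqrt{2\pi n}\,(n/e)^n\,e^{1/(12n+1)} \;\le\; n! \;\le\; \sqrt{2\pi n}\,(n/e)^n\,e^{1/(12n)}.
\]
Since the main argument uses $t_{n/2}$, I treat $n$ as even, write $\binom{n}{n/2}=n!/\bigl((n/2)!\bigr)^2$, and apply the upper Robbins inequality to the numerator and the lower Robbins inequality to each factor of $(n/2)!$ in the denominator.

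After substitution, the powers $(n/e)^n$ in the numerator and $(n/(2e))^n$ in the denominator cancel up to a clean factor of $2^n$, the radicals combine into $\sqrt{2\pi n}/(\pi n) = \sqrt{2/\pi}/\sqrt{n}$, and the correction factors collapse into a single exponential $e^{1/(12n)-2/(6n+1)}$. For every $n\ge 2$ this exponent is negative, so that exponential is at most $1$, yielding the sharp bound
\[
\binom{n}{n/2} \;\le\; \sqrt{\tfrac{2}{\pi}}\cdot\frac{2^n}{\sqrt{n}}.
\]

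It remains to compare the constant $\sqrt{2/\pi}$ with the target constant $e/\pi$: the desired inequality $\sqrt{2/\pi}\le e/\pi$ is equivalent to $\sqrt{2\pi}\le e$, and this holds because $2\pi<8<e^2$. Thus the constant $e/\pi$ in the statement is strictly weaker than what Stirling--Robbins yields, giving us ample slack. The step that in principle needs the most care is the bookkeeping of Robbins' correction factors, but because they contribute a factor $\le 1$ they can simply be discarded, so no delicate estimate is required. The entire proof is therefore a short direct computation; there is no real obstacle beyond a single application of Stirling's formula with explicit error control.
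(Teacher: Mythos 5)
Your proof follows essentially the same route as the paper: both apply Robbins' explicit two-sided form of Stirling's formula, bounding $n!$ from above and $(n/2)!$ from below; the paper simply uses the cruder upper bound $n!<n^{n+1/2}e^{-n+1}$ so that the constant $e/\pi$ falls out directly, whereas you keep the tighter constant $\sqrt{2/\pi}$ and then compare it to $e/\pi$ at the end. The argument is correct except for one arithmetic slip in that final comparison: the chain $2\pi<8<e^2$ is false, since $e^2\approx 7.389<8$. The inequality you actually need, $2\pi<e^2$ (equivalently $\sqrt{2\pi}\approx 2.507<2.718\approx e$), is true, so the fix is trivial --- replace the justification by, say, $2\pi<6.3<7.3<e^2$.
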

\begin{proof}
According to the estimation by \cite{Robbins1955}, 
\begin{equation*}
n!=\sqrt{2\pi} n^{n+\frac{1}{2}} e^{-n+r(n)} ,
\end{equation*} 
where $\frac{1}{12n+1}<r(n)<\frac{1}{12n}$. Here we only need a looser bound to prove our lemma, i.e., 
\begin{equation*}
\sqrt{2\pi} n^{n+\frac{1}{2}} e^{-n} < n! <  \sqrt{2\pi}  n^{n+\frac{1}{2}} e^{-n+\frac{1}{12n}} < n^{n+\frac{1}{2}} e^{-n+1} .
\end{equation*}
We have
\begin{align*}
{{n}\choose{n/2}} = \frac{n!}{(\frac{n}{2})!(\frac{n}{2})!} \le \frac{e\sqrt{n}(\frac{n}{e})^n}{\left(\sqrt{2\pi\frac{n}{2}} \left(\frac{n}{2e}\right)^{\frac{n}{2}}\right)^2} =  \frac{e}{\pi \sqrt{n}} \cdot 2^n
\end{align*}
\end{proof}

Next we show that event $\mathrm{A}$ can occur with a large probability, with a properly choosing $h$.

\begin{lemma}\label{lem:prob1} 
For any $n>1$, there exists a constant $h$, such that  $F(h t_{\min}) \ge \frac{11}{12}$, and we have 
\begin{equation*}
\Pr[\mathrm{A}] \ge 1- \frac{e}{2\pi} \cdot \frac{1}{n}
\end{equation*}
\end{lemma}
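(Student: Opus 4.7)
The plan is to select $h$ so that an individual $t_i$ falls at or below $h\cdot t_{\min}$ with probability at least $11/12$, and then bound the lower tail of a binomial via Lemma \ref{lem:Robbins}.

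For the first assertion, since $\mathrm{D}$ is supported on $[t_{\min},\infty)$, its CDF $F$ satisfies $F(x)\to 1$ as $x\to\infty$; hence there is a finite constant $h$ (depending only on $\mathrm{D}$) with $p:=F(h\cdot t_{\min})\ge 11/12$.

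For the probability bound, note that $\mathrm{A}$ is exactly the event that at least $n/2$ of the $t_i$'s lie at or below $h\cdot t_{\min}$, i.e., $\{Y\ge n/2\}$ where $Y$ is Binomial$(n,p)$. Thus
\[
\Pr[\mathrm{A}^c]=\sum_{k=0}^{n/2-1}\binom{n}{k}p^k(1-p)^{n-k}.
\]
I would apply three well-matched crude bounds: (i) $\binom{n}{k}\le\binom{n}{n/2}$ for $k\le n/2$; (ii) since $p\ge 11/12>1/2$, the quantity $p^k(1-p)^{n-k}$ is increasing in $k$, so for $k\le n/2-1$ it is at most $(p(1-p))^{n/2}\le(1/12)^{n/2}$ (using $p\le 1$ and $1-p\le 1/12$); (iii) Lemma \ref{lem:Robbins} gives $\binom{n}{n/2}\le\frac{e}{\pi\sqrt{n}}\cdot 2^n$. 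Multiplying through yields
\[
\Pr[\mathrm{A}^c]\le\tfrac{n}{2}\cdot\tfrac{e}{\pi\sqrt{n}}\cdot 2^n\cdot(1/12)^{n/2}=\tfrac{e\sqrt{n}}{2\pi}\cdot 3^{-n/2},
\]
which is at most $\frac{e}{2\pi n}$ precisely when $n^3\le 3^n$, an elementary inequality that holds for all integers $n\ge 2$.

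The main obstacle is calibrating the threshold $11/12$: it is chosen just large enough that $4p(1-p)\le 1/3$, which cancels the $2^n$ arising from the central binomial coefficient in Lemma \ref{lem:Robbins} and leaves enough exponential slack to absorb the prefactor $\sqrt{n}$ into the required $1/n$ decay. A noticeably looser probability threshold would not close the calculation, and a significantly tighter one would be pointless since we are aiming only for a constant average-case ratio in the end.
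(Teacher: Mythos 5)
Your proposal is correct and follows essentially the same route as the paper: choose $h$ with $F(h\,t_{\min})\ge 11/12$, write $\Pr[\mathrm{A}^c]$ as the lower binomial tail, bound each term by $\binom{n}{n/2}(1/12)^{n/2}$, apply Lemma~\ref{lem:Robbins}, and finish with $3^n\ge n^3$. The only cosmetic difference is that you justify the per-term bound via monotonicity of $p^k(1-p)^{n-k}$ in $k$, whereas the paper simply uses $p^k<1$ and $(1-p)^{n-k}\le(1-p)^{n/2}$; the resulting estimate is identical.
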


\begin{proof}
Since event $\mathrm{A}=\{ t_{\frac{n}{2}} \le h \cdot t_{\min}\}$, the probability that event $\mathrm{A}$ occurs can be calculated by
\begin{align}
\Pr[\mathrm{A}] = \Pr[t_{\frac{n}{2}} \le h \cdot t_{\min}]   
&= \sum_{k=\frac{n}{2}}^{n} {{n}\choose{k}} \left(F(h t_{\min})\right)^k \left( 1-F(h t_{\min}) \right)^{n-k}    \nonumber     \\
&=1 - \sum_{k=0}^{\frac{n}{2}-1} {{n}\choose{k}} \left(F(h t_{\min})\right)^k \left( 1-F(h t_{\min}) \right)^{n-k}  \label{eq1}
\end{align}
Since  $\left( 1-F(h t_{\min}) \right)^{n-k} \le \left( 1-F(h t_{\min}) \right)^{n/2}$, ${{n}\choose{k}} \le {{n}\choose{n/2}}$, $k=0, \cdots, \frac{n}{2}-1$, and $\left(F(h t_{\min})\right)^k < 1$, we get
\begin{align}
(\ref{eq1}) &\ge 1 - \sum_{k=0}^{\frac{n}{2}-1} {{n}\choose{n/2}}  \left( 1-F(h t_{\min}) \right)^{\frac{n}{2}}  \nonumber   \\
& = 1 - \frac{n}{2} {{n}\choose{n/2}}  \left( 1-F(h t_{\min}) \right)^{\frac{n}{2}} \nonumber  \\
& \ge 1 - \frac{n}{2} \cdot \frac{e}{\pi \sqrt{n}} \cdot 2^n \cdot  \left( 1-F(h t_{\min}) \right)^{\frac{n}{2}}      \label{eq2}
\end{align}
By choosing $h$ such that $F(h t_{\min}) \ge \frac{11}{12}$, we get
\begin{align*}
(\ref{eq2}) & \ge 1 - \frac{n}{2} \cdot \frac{e}{\pi \sqrt{n}} \cdot 2^n \cdot \left( \frac{1}{12} \right)^{\frac{n}{2}}  \\
&= 1 - \frac{e}{2\pi} \cdot \sqrt{n} \cdot 3^{-\frac{n}{2}}  \\
& \ge 1 - \frac{e}{2\pi} \cdot \frac{1}{n}
\end{align*}

The last inequality is due to $3^n \ge n^3, \forall n>1$. 

\noindent
Therefore, $\Pr[\mathrm{A}] \ge 1- \frac{e}{2\pi} \cdot \frac{1}{n}$. 
\end{proof}

\vspace{0.4cm}
\noindent

We can then bound the probability of the case that the expected cost of the mechanism $\mathrm{M}$ is larger than $(2h+1)t_1$.

\begin{lemma}\label{lem:prob2} 
When event $\mathrm{A}$ holds, and for the choice of $h$ in the above lemma, we have
\begin{equation*}
\Pr\left[SC_{\mathrm{M}}({\mathbf t}) > (2h+1)t_1 \right] < \frac{e}{2\pi} \cdot \frac{1}{n}.
\end{equation*}
\end{lemma}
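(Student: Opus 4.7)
The plan is to observe that this lemma is essentially a direct corollary of Lemma~\ref{lem:cost} combined with Lemma~\ref{lem:prob1}, obtained via contrapositive reasoning. The phrase ``when event $\mathrm{A}$ holds'' in the statement should be read as ``for the same choice of $h$ under which event $\mathrm{A}$ has been analyzed'': the inequality bounds the probability of the complementary (bad) outcome, not a conditional probability.

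First I would invoke Lemma~\ref{lem:cost}, which states that whenever event $\mathrm{A}$ occurs we have $SC_{\mathrm{M}}(\mathbf t) \le (2h+1) t_1$. Taking the contrapositive, the event $\{SC_{\mathrm{M}}(\mathbf t) > (2h+1) t_1\}$ is contained in the complement $\bar{\mathrm{A}}$. Hence
\begin{equation*}
\Pr\bigl[SC_{\mathrm{M}}(\mathbf t) > (2h+1) t_1\bigr] \;\le\; \Pr[\bar{\mathrm{A}}] \;=\; 1 - \Pr[\mathrm{A}].
\end{equation*}

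Next I would apply Lemma~\ref{lem:prob1} with the same $h$ chosen so that $F(h \cdot t_{\min}) \ge \frac{11}{12}$. That lemma gives $\Pr[\mathrm{A}] \ge 1 - \frac{e}{2\pi} \cdot \frac{1}{n}$, so the upper bound on $\Pr[\bar{\mathrm{A}}]$ is exactly $\frac{e}{2\pi} \cdot \frac{1}{n}$, which yields the claimed inequality.

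There is no real obstacle here: the only subtlety is interpreting the statement correctly, namely that we are bounding the probability that the mechanism's cost exceeds $(2h+1) t_1$, rather than conditioning on $\mathrm{A}$ (under which the event would have probability zero by Lemma~\ref{lem:cost}). Once this reading is adopted, the proof reduces to two lines: a containment of events followed by substitution of the tail bound from Lemma~\ref{lem:prob1}.
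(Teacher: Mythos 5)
Your proposal is correct and follows essentially the same route as the paper: the paper likewise notes that by Lemma~\ref{lem:cost} event $\mathrm{A}$ implies $SC_{\mathrm{M}}(\mathbf t) \le (2h+1)t_1$, deduces $\Pr[SC_{\mathrm{M}}(\mathbf t) > (2h+1)t_1] \le 1 - \Pr[\mathrm{A}]$, and then applies the bound from Lemma~\ref{lem:prob1}. Your reading of the awkwardly phrased hypothesis (as a bound on the unconditional probability of the bad event for the chosen $h$, not a probability conditioned on $\mathrm{A}$) matches the paper's intent.
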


\begin{proof}
According to Lemma~\ref{lem:cost}, event $\mathrm{A}$ implies $SC_{\mathrm{M}}(\mathbf t) \le (2h+1) t_1$, so we have 
\begin{equation*}
\Pr[\mathrm{A}] \le \Pr[SC_{\mathrm{M}}(\mathbf t) \le (2h+1) t_1]
\end{equation*} 
Hence, 
\begin{equation*}
1 - \Pr[\mathrm{A}] \ge \Pr[SC_{\mathrm{M}}(\mathbf t) > (2h+1) t_1]
\end{equation*}
According to Lemma~\ref{lem:prob1}, we have 
\begin{align*}
1 - \Pr[\mathrm{A}] \le \frac{e}{2\pi} \cdot \frac{1}{n}
\end{align*}
So, 
\begin{equation*}
\Pr\left[SC_{\mathrm{M}}(\mathbf t) > (2h+1) t_1 \right] < \frac{e}{2\pi} \cdot \frac{1}{n}
\end{equation*}
\end{proof}

\vspace{0.3cm}
\noindent

We have established necessary building blocks. By carefully choosing the parameter $h$, we can partition the valuation space into two sets: $\{SC_{\mathrm{M}}(\mathbf t)\le (2h+1)t_1\}$ and $\{SC_{\mathrm{M}}(\mathbf t) > (2h+1)t_1\}$. 
Last, we will use Corollary~\ref{coro:coro} and Lemma~\ref{lem:prob2} to prove our main result.  Essentially, Corollary~\ref{coro:coro} upper bounds the expected approximation ratio of the first case and  Lemma~\ref{lem:prob2} upper bounds  the probability of the second case occurring. Note that in any case, the worst-case ratio is upper bounded by $\frac{n+1}{2}$ according to \citep{DBLP:journals/mst/Koutsoupias14}. By adding them up together, we obtain our upper bound. 

\vspace{0.3cm}
\noindent

\begin{theorem}\label{thm:main1}
For any distribution on $[t_{\min}, +\infty)$, and a constant $h$ such that  $F(h t_{\min}) \ge \frac{11}{12}$,  the average-case approximation ratio of the mechanism $\mathrm{M}$ is upper bounded by $2 h + 1.33$.  That is,
\begin{equation*}
r_{average} = \mathbb{E}_{\mathbf t\sim D} \left[ \frac{SC_{\mathrm{M}}(\mathbf t)}{SC_{\mathrm{OPT}}(\mathbf t)} \right] < 2 h + 1.33
\end{equation*}
\end{theorem}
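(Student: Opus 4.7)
The plan is to partition the input space using event $\mathrm{A}$ and bound the expected ratio separately on $\mathrm{A}$ and on its complement. Writing
\begin{equation*}
\EE_{\mathbf t \sim \mathrm D}\!\left[\frac{SC_{\mathrm M}(\mathbf t)}{SC_{\mathrm{OPT}}(\mathbf t)}\right] = \EE\!\left[\frac{SC_{\mathrm M}(\mathbf t)}{SC_{\mathrm{OPT}}(\mathbf t)}\,\mathbf{1}_{\mathrm A}\right] + \EE\!\left[\frac{SC_{\mathrm M}(\mathbf t)}{SC_{\mathrm{OPT}}(\mathbf t)}\,\mathbf{1}_{\overline{\mathrm A}}\right],
\end{equation*}
the first summand is handled by Corollary~\ref{coro:coro}: Lemma~\ref{lem:cost} shows that whenever $\mathrm{A}$ holds one has the pointwise bound $SC_{\mathrm M}/SC_{\mathrm{OPT}} \le 2h+1$ (using $SC_{\mathrm{OPT}}=t_1$), so this summand is at most $(2h+1)\Pr[\mathrm A] \le 2h+1$, and this step is distribution-free.

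For the second summand, I would combine the worst-case guarantee $SC_{\mathrm M}(\mathbf t)/SC_{\mathrm{OPT}}(\mathbf t) \le (n+1)/2$ from~\citep{DBLP:journals/mst/Koutsoupias14} with the tail estimate of Lemma~\ref{lem:prob2}. Choosing $h$ so that $F(h\,t_{\min}) \ge 11/12$, as in Lemma~\ref{lem:prob1} (such an $h$ exists for any distribution supported on $[t_{\min},\infty)$, since $F\to 1$), gives $\Pr[\overline{\mathrm A}] \le \tfrac{e}{2\pi n}$. Hence the second summand is at most $\tfrac{n+1}{2}\cdot\tfrac{e}{2\pi n} = \tfrac{e(n+1)}{4\pi n}$.

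Adding the two contributions yields $r_{\text{average}} \le 2h + 1 + \tfrac{e(n+1)}{4\pi n}$. The final step is to check that $\tfrac{e(n+1)}{4\pi n} < 0.33$ uniformly in $n$. Since this quantity is monotonically decreasing in $n$, its supremum over $n\ge 2$ is attained at $n=2$, giving $\tfrac{3e}{8\pi} \approx 0.3245 < 0.33$, and the desired bound $r_{\text{average}} < 2h + 1.33$ follows. I expect the only point requiring care to be the fact that Corollary~\ref{coro:coro} must be used as a \emph{pointwise} bound on the event $\mathrm{A}$ (so that it can be pulled inside the expectation against an arbitrary distribution), rather than as a conditional expectation; this is immediate from Lemma~\ref{lem:cost}, and once it is noted the rest of the argument is a direct combination of the lemmas already established.
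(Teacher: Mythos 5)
Your proposal is correct and follows essentially the same route as the paper: partition the input space by the event $\mathrm{A}$, use Lemma~\ref{lem:cost}/Corollary~\ref{coro:coro} for a pointwise bound of $2h+1$ on the good event, multiply the worst-case ratio $\frac{n+1}{2}$ by the tail probability $\frac{e}{2\pi n}$ on the bad event, and check $\frac{3e}{8\pi}<0.33$. The only (cosmetic) difference is that you split on $\mathrm{A}$ versus $\overline{\mathrm{A}}$ directly while the paper splits on $\{SC_{\mathrm{M}}(\mathbf t)\le (2h+1)t_1\}$ versus its complement; your handling of the first term as a pointwise bound rather than a loosely written conditional expectation is, if anything, slightly cleaner.
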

\begin{proof}
It is easy to see that the above two sets are collectively exhaustive and mutually exclusive, and Lemma~\ref{lem:prob2} holds. So we have
\begin{align*}
r_{average} &= \mathbb{E}_{\mathbf t\sim D} \left[ \frac{SC_{\mathrm{M}}(\mathbf t)}{SC_{\mathrm{OPT}}(\mathbf t)} \right] \\
& \le \Pr\left[SC_{\mathrm{M}}(\mathbf t)\le (2h+1)t_1 \right] \cdot \mathbb{E} \left[ \frac{SC_{\mathrm{M}}(\mathbf t)}{SC_{\mathrm{OPT}}(\mathbf t)} \right] + 
  \Pr\left[SC_{\mathrm{M}}(\mathbf t) > (2h+1)t_1 \right] \cdot \mathbb{E} \left[ \frac{SC_{\mathrm{M}}(\mathbf t)}{SC_{\mathrm{OPT}}(\mathbf t)} \right] \\
&\le \Pr\left[SC_{\mathrm{M}}(\mathbf t)\le (2h+1)t_1 \right] \cdot (2h+1) +   \Pr\left[SC_{\mathrm{M}}(\mathbf t) > (2h+1)t_1 \right] \cdot \frac{n+1}{2} \\
&\le 1 \cdot (2h+1) + \frac{e}{2\pi} \cdot \frac{1}{n} \cdot \frac{n+1}{2} \\
&= 2h+1 + \frac{e}{4\pi} \cdot \frac{n+1}{n} \\
&\le 2h+1 + \frac{3e}{8\pi} \\
&< 2h+1.33
\end{align*}
Therefore, the average-case approximation ratio of the mechanism $\mathrm{M}$ is upper bounded by $2 h + 1.33$.
\end{proof}





\vspace{0.3cm}
In hindsight, when the costs of the machines $t_i$'s follow any heavy-tailed distribution, no matter how heavy is the tail, the mechanism $\mathrm{M}$ has a constant average-case approximation ratio bound. However, this was not intuitively clear beforehand, as the social cost of the mechanism $\mathrm{M}$ depends on how often the inputs contain large $t_i$ and how big they are.

In the following, we give a few examples of the distribution to show the choice of $h$ and the constant upper bounds for these distributions.

\vspace{0.3cm}
\noindent
{\bf Example 1: Pareto Distribution}

The Pareto distribution  is a power law distribution that is widely used in the description of social, scientific, geophysical, actuarial, and many other types of observable phenomena. According to the influential studies by \citep{DBLP:conf/sigmetrics/ArlittW96} and \citep{ReedJorgensen} as well as the references therein, the distributions of the file size (of web server workload and of Internet traffic which uses the TCP protocol) match well with the Pareto distribution.  

That is, for a random variable $T$ chosen from  this Pareto distribution,  the probability that $T$ is smaller  than a value $t$, is given by 
\begin{equation*}
F(t)=\Pr(T<t)= 
  \begin{cases}
   1 - \left( \frac{t_{\min}}{t} \right)^{\alpha}         &          t \ge t_{\min} \\
   0         &          t < t_{\min}
  \end{cases}
\end{equation*}
where $\alpha>0$ is the tail index of the distribution. 

Note that in the proof of Theorem \ref{thm:main1}, the only place we need to deal with the particular distribution is Lemma \ref{lem:prob1}.  So, by handling the constant $h$ for the Pareto distribution, we obtain the following result.

\begin{theorem}\label{thm:main2}
For the Pareto distribution, let $h=12^{\frac{1}{\alpha}}$.  The average-case approximation ratio of the mechanism $\mathrm{M}$ is upper bounded by $2 \cdot 12^{\frac{1}{\alpha}} + 1.33$. 
\end{theorem}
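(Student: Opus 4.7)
The plan is to invoke Theorem~\ref{thm:main1} as a black box and simply verify that the prescribed choice $h = 12^{1/\alpha}$ satisfies the only distribution-specific hypothesis used in that theorem, namely $F(h\, t_{\min}) \ge \tfrac{11}{12}$. Once that is established, the bound $2h + 1.33$ specializes immediately to $2 \cdot 12^{1/\alpha} + 1.33$, and nothing else about the Pareto distribution enters the argument.

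First, I would write down the Pareto CDF evaluated at $h \, t_{\min}$. Since for $t \ge t_{\min}$ we have $F(t) = 1 - (t_{\min}/t)^{\alpha}$, a direct substitution gives
\begin{equation*}
F(h \, t_{\min}) \;=\; 1 - \left(\frac{t_{\min}}{h \, t_{\min}}\right)^{\alpha} \;=\; 1 - h^{-\alpha}.
\end{equation*}
The inequality $F(h\, t_{\min}) \ge \tfrac{11}{12}$ is therefore equivalent to $h^{-\alpha} \le \tfrac{1}{12}$, i.e. $h^{\alpha} \ge 12$, which rearranges to $h \ge 12^{1/\alpha}$. So the threshold value $h = 12^{1/\alpha}$ is exactly the smallest choice that fulfils the condition, justifying the statement of the theorem.

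Second, with this $h$ in hand I would observe that Lemma~\ref{lem:prob1}, Lemma~\ref{lem:prob2}, Lemma~\ref{lem:cost} and Corollary~\ref{coro:coro} are all distribution-free conditional on $F(h\, t_{\min}) \ge \tfrac{11}{12}$; hence the chain of inequalities in the proof of Theorem~\ref{thm:main1} applies verbatim with no modification. Plugging $h = 12^{1/\alpha}$ into the bound $2h + 1.33$ yields the claimed
\begin{equation*}
r_{\text{average}}(\mathrm{M}) \;<\; 2 \cdot 12^{1/\alpha} + 1.33.
\end{equation*}

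There is essentially no obstacle here: the heavy lifting was done in Theorem~\ref{thm:main1}, which was deliberately stated so that it reduces any particular distribution to a one-line quantile computation. The only point worth flagging is the qualitative interpretation, namely that the bound degrades gracefully as $\alpha \to 0$ (heavier tail $\Rightarrow$ larger $h$) and tends to the clean constant $3.33$ as $\alpha \to \infty$, consistent with the intuition that lighter tails concentrate the order statistics near $t_{\min}$.
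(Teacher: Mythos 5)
Your proposal is correct and takes essentially the same route as the paper: the paper likewise observes that the only distribution-dependent step is establishing $F(h\,t_{\min})\ge \tfrac{11}{12}$, and your computation $F(h\,t_{\min}) = 1 - h^{-\alpha} = \tfrac{11}{12}$ for $h = 12^{1/\alpha}$ is exactly the required verification. The only cosmetic difference is that the paper re-derives the bound $\Pr[\mathrm{A}] \ge 1 - \frac{e}{2\pi}\cdot\frac{1}{n}$ explicitly for the Pareto CDF rather than citing Theorem~\ref{thm:main1} as a black box, which adds nothing beyond what you have.
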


\begin{proof}
For the Pareto distribution, let $h=12^{\frac{1}{\alpha}}$. In Lemma \ref{lem:prob1}, we would have
\begin{align*}
\Pr[\mathrm{A}] &= \Pr[t_{\frac{n}{2}} \le h \cdot t_{\min}] = \sum_{k=\frac{n}{2}}^{n} {{n}\choose{k}} \left(F(h t_{\min})\right)^k \left( 1-F(h t_{\min}) \right)^{n-k} \\
&= \sum_{k=\frac{n}{2}}^{n} {{n}\choose{k}} \left( 1-\frac{1}{h^{\alpha}} \right)^{k} \left( \frac{1}{h^{\alpha}} \right)^{n-k} 
=1 - \sum_{k=0}^{\frac{n}{2}-1} {{n}\choose{k}} \left( 1-\frac{1}{h^{\alpha}} \right)^{k} \left( \frac{1}{h^{\alpha}} \right)^{n-k} \\
&\ge 1 - \sum_{k=0}^{\frac{n}{2}-1} {{n}\choose{n/2}} \left( \frac{1}{h^{\alpha}} \right)^{\frac{n}{2}} 
= 1 - \frac{n}{2} {{n}\choose{n/2}} \left( \frac{1}{h^{\alpha}} \right)^{\frac{n}{2}} \\
&\ge 1 - \frac{n}{2} \cdot \frac{e}{\pi \sqrt{n}} \cdot 2^n \left( \frac{1}{h^{\alpha}} \right)^{\frac{n}{2}}   
= 1 - \frac{n}{2} \cdot \frac{e}{\pi \sqrt{n}} \cdot 2^n \left( \frac{1}{12} \right)^{\frac{n}{2}}  \\
&= 1 - \frac{e}{2\pi} \cdot \sqrt{n} \cdot 3^{-\frac{n}{2}}  \\
&\ge 1 - \frac{e}{2\pi} \cdot \frac{1}{n}
\end{align*}
The rest of the proof follows the proof in Theorem \ref{thm:main1}.
\end{proof}


\vspace{0.3cm}
\noindent
{\bf Example 2: Exponential Distribution}

We then consider the case that machines' costs $t_{i}$'s are independent variables and follow a truncated Exponential distribution $\mathrm{D}[t_{\min},\infty)$. That is, for a random variable $T$ chosen from  this Exponential distribution,  the probability that $T$ is smaller  than a value $t$, is given by 
\begin{equation*}
F(t)=\Pr(T<t)= 
  \begin{cases}
   1 -  \frac{1}{e^{\lambda t}}         &          t \ge t_{\min} \\
   0         &          t < t_{\min}
  \end{cases}
\end{equation*}
where $\lambda>0$ is the tail index of the distribution. 

\begin{theorem}\label{thm:main2}
For the Exponential distribution, let $h=\frac{1}{\lambda t_{\min}} \ln 12$.  The average-case approximation ratio of the mechanism $\mathrm{M}$ is upper bounded by $2 \cdot \frac{1}{\lambda t_{\min}} \ln 12 + 1.33$. 
\end{theorem}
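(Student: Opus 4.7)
The plan is to reduce this to Theorem~\ref{thm:main1}, exactly as was done for the Pareto case. Theorem~\ref{thm:main1} bounds the average-case ratio by $2h+1.33$ for \emph{any} distribution on $[t_{\min},\infty)$, provided only that the candidate constant $h$ satisfies $F(h\,t_{\min})\ge 11/12$. So the entire task reduces to verifying that the prescribed choice $h=\frac{1}{\lambda t_{\min}}\ln 12$ meets this CDF condition for the truncated exponential, and then quoting Theorem~\ref{thm:main1}.

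First I would write down the CDF at the threshold: since for $t\ge t_{\min}$ we have $F(t)=1-e^{-\lambda t}$, evaluating at $h\,t_{\min}$ gives $F(h\,t_{\min})=1-e^{-\lambda h t_{\min}}$. Substituting $h=\frac{\ln 12}{\lambda t_{\min}}$ yields $\lambda h t_{\min}=\ln 12$ and hence $F(h\,t_{\min})=1-\tfrac{1}{12}=\tfrac{11}{12}$, so the condition of Theorem~\ref{thm:main1} is met with equality. (Implicitly I should note that this requires $h\,t_{\min}\ge t_{\min}$, i.e.\ $\ln 12\ge \lambda t_{\min}$; in the complementary regime the condition is already satisfied by $h=1$ and the bound becomes even stronger, so one can always take $h$ no larger than $\frac{\ln 12}{\lambda t_{\min}}$.)

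Second, having verified the hypothesis of Theorem~\ref{thm:main1}, I would invoke it directly: the average-case approximation ratio is at most $2h+1.33$, which with our choice of $h$ equals $\frac{2\ln 12}{\lambda t_{\min}}+1.33$. This matches the stated bound and completes the proof.

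There is no substantive obstacle here, since the heavy lifting (the tail bound on $\Pr[\mathrm{A}]$ via Lemma~\ref{lem:Robbins} and the cost bound in Lemma~\ref{lem:cost}) is already absorbed into Theorem~\ref{thm:main1}. The only minor care needed is the remark above about the regime $\lambda t_{\min}\ge \ln 12$, so that $h$ as defined really does act as an upper-quantile parameter rather than something below $t_{\min}$; once that is addressed, the proof is a one-line CDF computation followed by an application of the master theorem.
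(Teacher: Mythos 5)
Your proof is correct and follows exactly the route the paper intends: the paper omits the proof of this theorem, but just as in the Pareto example, the only distribution-specific step is checking $F(h\,t_{\min}) = 1 - e^{-\lambda h t_{\min}} = \tfrac{11}{12}$ for $h = \frac{\ln 12}{\lambda t_{\min}}$ and then invoking Theorem~\ref{thm:main1}. Your side remark about the regime $\lambda t_{\min} > \ln 12$ (where $h<1$, so $h\,t_{\min}<t_{\min}$ and the quantile condition fails) flags a genuine edge case in the theorem statement itself, though note that your fix of taking $h=1$ there yields the bound $3.33$ rather than the strictly smaller stated bound $\frac{2\ln 12}{\lambda t_{\min}}+1.33$.
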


The proof is omitted here.

\vspace{0.3cm}
\noindent
{\bf Example 3: Log-logistic Distribution}

The log-logistic distribution is the probability distribution of a random variable whose logarithm has a logistic distribution. It is similar in shape to the log-normal distribution but has heavier tails. It is used in networking to model the transmission times of data. The cumulative distribution function is
\begin{align*}
F(t; \alpha, \beta) = \frac{t^{\beta}}{\alpha^{\beta} + t^{\beta}}
\end{align*}
where $\alpha>0$ is a scale parameter and $\beta>0$ is a shape parameter. For simplicity, we take $\alpha=1$ and have the following result.
 
\begin{theorem}\label{thm:main3}
For the Log-logistic distribution, let $h=\frac{1}{t_{\min}} \cdot e^{\frac{\ln 11}{\beta}}$.  The average-case approximation ratio of the mechanism $\mathrm{M}$ is upper bounded by $2 \cdot \frac{1}{t_{\min}} \cdot e^{\frac{\ln 11}{\beta}} + 1.33$. 
\end{theorem}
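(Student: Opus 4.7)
The plan is to mirror the proof of Theorem \ref{thm:main2} for the Pareto case: since the proof of Theorem \ref{thm:main1} is entirely distribution-agnostic except for the single ingredient $F(h\,t_{\min})\ge 11/12$ invoked inside Lemma \ref{lem:prob1}, it suffices to verify that the proposed choice of $h$ meets that inequality for the log-logistic CDF, and then appeal to Theorem \ref{thm:main1} verbatim.

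First I would substitute $x=h\,t_{\min}$ into $F(t;1,\beta)=\frac{t^{\beta}}{1+t^{\beta}}$ and solve
\begin{equation*}
\frac{x^{\beta}}{1+x^{\beta}} \ge \frac{11}{12}.
\end{equation*}
Clearing denominators gives $12x^{\beta}\ge 11(1+x^{\beta})$, i.e., $x^{\beta}\ge 11$, equivalently $x\ge 11^{1/\beta}=e^{\ln 11/\beta}$. Consequently, taking $h=\frac{1}{t_{\min}}\cdot e^{\ln 11/\beta}$ yields $h\,t_{\min}=e^{\ln 11/\beta}$ and so $F(h\,t_{\min})=\tfrac{11}{12}$, exactly satisfying the hypothesis of Theorem \ref{thm:main1} (with equality).

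With that hypothesis verified, Lemma \ref{lem:cost}, Corollary \ref{coro:coro}, Lemma \ref{lem:prob1}, and Lemma \ref{lem:prob2} apply without modification, and the case-analysis in the proof of Theorem \ref{thm:main1} yields
\begin{equation*}
r_{\text{average}}(\mathrm{M}) < 2h + 1.33 = 2\cdot\frac{1}{t_{\min}}\cdot e^{\ln 11/\beta}+1.33,
\end{equation*}
which is the claim. The remainder of the derivation is nothing more than reproducing the last display of Theorem \ref{thm:main1} with the new $h$ substituted in.

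There is essentially no main obstacle here: the entire work is a one-line algebraic inversion of the log-logistic CDF. The only minor point worth a remark is that, as with the exponential example, the log-logistic distribution natively lives on $[0,\infty)$, so one implicitly interprets it as a truncated distribution on $[t_{\min},\infty)$ (or regards $t_{\min}$ as the smallest realization of the sample) so that the framework of Theorem \ref{thm:main1} applies cleanly.
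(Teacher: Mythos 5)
The paper states this theorem without proof, and your argument is exactly the intended one: the same template as the Pareto case, namely verify the single distribution-dependent hypothesis $F(h\,t_{\min})\ge \tfrac{11}{12}$ of Lemma~\ref{lem:prob1} and then invoke the distribution-agnostic machinery of Theorem~\ref{thm:main1}. Your algebra $(h\,t_{\min})^{\beta}=11\Rightarrow F(h\,t_{\min})=\tfrac{11}{12}$ is correct, and your closing remark about interpreting the log-logistic as supported on $[t_{\min},\infty)$ flags the only real ambiguity, which is inherited from the paper rather than introduced by your proof.
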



\section{Conclusion and Future Work}
In this paper, we extend the worst-case approximation ratio analysis for the scheduling problem studied in \citep{DBLP:journals/mst/Koutsoupias14} to the average-case approximation ratio analysis. We show that, when the costs of the machines are independent and identically distributed, the average-case approximation ratios of the optimal mechanism $\mathrm{M}$ have constant bounds, which is asymptotically better than the worst-case approximation ratios.  Our results offer some relief for applying the mechanism $\mathrm{M}$ in practice, apart from the fact that in the worst case the expected cost of the mechanism is $\Theta{(n)}$ times of what the optimal cost is.

A lot of problems remain open. Firstly, as we employ the worst-case analysis as a framework for comparing truthful mechanisms, the average-case analysis is also a framework for doing so. Although the mechanism $\mathrm{M}$ in \citep{DBLP:journals/mst/Koutsoupias14}  is the best mechanism for the problem in terms of the worst-case ratio, it is not clear whether there is any other mechanism could be better than $\mathrm{M}$ in terms of average-case ratio. In any attempts to comparing $\mathrm{M}$ and other mechanisms, the specific distribution that the inputs follow matters, as we have pointed out that the assumptions on the distribution are vital in the average-case analysis so the comparison has to be done on the same distribution. There may not be a mechanism that is universally best for any distribution.

Secondly, it would be interesting to show some lower bounds for the average-case ratio of any truthful mechanism,  but one should expect some much more involved arguments than their worst-case lower bound counterparts, and again, very likely different distributions need to be handled differently. 

One might query the smoothed analysis of the mechanism $\mathrm{M}$. We should note that, unlike the random priority mechanism studied in \cite{DBLP:conf/mfcs/DengG017} that has a constant smoothed approximation ratio, the smoothed ratio of the mechanism  $\mathrm{M}$ would not be asymptotically different from the worst-case ratio. To see this, the tight bound example in the worst-case analysis of the mechanism  $\mathrm{M}$  is obtained when the ratio of the minimum value of the processing times against the maximum value of the processing times approaches 0, i.e., $t_1/t_n \to 0$. Obviously, any small perturbation around these inputs would not change the nature of this fact.

Many more approximate mechanism design problems deserve average-case analysis to understand the nature of the problems and the performance of the mechanisms.


\bibliographystyle{plainnat}
\bibliography{refs}

\end{document}